\documentclass{cmslatex}
\usepackage[paperwidth=7in, paperheight=10in, margin=.875in]{geometry}
 \usepackage[backref,colorlinks,linkcolor=red,anchorcolor=green,citecolor=blue]{hyperref}
\usepackage{amsfonts,amssymb}
\usepackage{amsmath}
\usepackage{algpseudocode,algorithm,algorithmicx}
\usepackage{bm}
\usepackage{epsfig} 
\usepackage{graphicx}
\usepackage{cite}
\usepackage{enumerate}
\usepackage{comment}

\sloppy

\thinmuskip = 0.5\thinmuskip \medmuskip = 0.5\medmuskip
\thickmuskip = 0.5\thickmuskip \arraycolsep = 0.3\arraycolsep



\DeclareMathOperator{\trace}{trace}



\def\la{\left\langle}
\def\ra{\right\rangle}
\def\lb{\left(}
\def\rb{\right)}

\newcommand{\matsnorm}[2]{\left\| #1\right\|_{{#2}}}

\newcommand{\fronorm}[1]{\ensuremath{\matsnorm{#1}{\footnotesize{\mathsf{F}}}}}
\newcommand{\opnorm}[1]{\ensuremath{\matsnorm{#1}{}}}

\newcommand{\twonorm}[1]{\ensuremath{\matsnorm{#1}{\footnotesize{2}}}}

\newcommand{\bfm}[1]{\bm{#1}}

\newcommand{\E}[2][]{\mathbb{E}_{#1} \left[ #2 \rule{0mm}{3mm}\right]}

\def\va{\bfm a}   \def\mA{\bfm A}  
\def\vb{\bfm b}   \def\mB{\bfm B}  
     \def\C{\mathbb{C}}
     
\def\ve{\bfm e}     
    
\newcommand\vg{\bfm g}     
\def\vh{\bfm h}     
   \def\mI{\bfm I}

   \def\mM{\bfm M}  
   \def\mN{\bfm N}

     \def\R{\mathbb{R}}

   \def\mU{\bfm U}  
   \def\mV{\bfm V}  
   \def\mW{\bfm W}  
\def\vx{\bfm x}   \def\mX{\bfm X}  
\def\vy{\bfm y}     
\def\vz{\bfm z}   \def\mZ{\bfm Z}


\def\calA{{\cal  A}}

\def\calD{{\cal  D}}

\def\calG{{\cal  G}} 
\def\calH{{\cal  H}} 
\def\calI{{\cal  I}}

\def\calO{{\cal  O}} 
\def\calP{{\cal  P}}

\def\calT{{\cal  T}}

\def\calW{{\cal  W}}

\newcommand{\bfsym}[1]{\bm{#1}}

             \def\bSigma{\bfsym \Sigma}

\newcommand{\bX}{\boldsymbol{X}}

\newcommand{\bB}{\boldsymbol{B}}
\newcommand{\bC}{\mathbb{C}}

\newcommand{\bb}{\boldsymbol{b}}

\newcommand{\be}{\boldsymbol{e}}

\newcommand{\bZ}{\boldsymbol{Z}}
\newcommand{\bW}{\boldsymbol{W}}
\newcommand{\bh}{\boldsymbol{h}}

\newcommand{\by}{\boldsymbol{y}}


\def \calGT{\calG^{\ast}}

\def \calAT{\calA^{\ast}}

\def \tran {\mathsf{H}}
\def \tranH{\mathsf{H}}

\def \bone{\bm 1}

   \allowdisplaybreaks
\begin{document}
  \title{Blind Super-resolution of Point Sources via  Fast Iterative Hard Thresholding\thanks{Received date, and accepted date (The correct dates will be entered by the editor).}}
 
 
 \author{Zengying Zhu\thanks{Z.~Zhu is with School of Mathematical Sciences, Fudan University, Shanghai, China, zengyingzhu@fudan.edu.cn}
 	\and
 	Jinchi Chen \thanks{J.~Chen is with School of Data Science, Fudan University, Shanghai, China, e-mail: jcchen.phys@gmail.com. This research is supported by NSFC 12001108}
 	\and Weiguo Gao\thanks{W.~Gao is with School of Mathematical Sciences and School of Data Science, Fudan University, Shanghai, China, e-mail: wggao@fudan.edu.cn. This research is supported in part by NSFC 11690013 and 71991471.}}

 \pagestyle{myheadings} \markboth{BLIND SUPER-RESOLUTION OF POINT SOURCES VIA  FIHT-VHL}{Z.~ZHU, J.~CHEN AND W.~GAO} \maketitle
 
 \begin{abstract} In this work, we develop a provable fast algorithm for blind super-resolution based on the low rank structure of vectorized Hankel matrix associated with the target matrix. Theoretical results show that the proposed method converges to the ground truth with linear convergence rate. 
 	Numerical experiments are also conducted to illustrate the linear convergence and effectiveness of the proposed approach.
 \end{abstract}
 \begin{keywords}  Fast Iterative Hard Threshoding, Vectorized Hankel Lift, Blind Super-resolution
 \end{keywords}

 \begin{AMS}65F55;15A83; 90C26; 94A12
\end{AMS}
\section{Introduction}

Blind super-resolution is the problem of estimating $\{\tau_k, d_k, \vg_k\}_{k=1}^r$ from the observations
\begin{align}
	\label{eq: samples}
	\vy[j] = \sum_{k=1}^{r} d_k e^{-2\pi\imath \tau_k(j-1)}\vg_k[j],\text{ for }j=1,\ldots, n,
\end{align}
where $\{d_k, \tau_k\}_{k=1}^r$ are unknown coefﬁcients and frequencies associated with the complex exponentials, and $\{\vg_k\}$ are unknown point spread functions. It has many applications, such as computational photography \cite{fergus2006removing}, multi-user communication system \cite{luo2006low} and seismic data analysis \cite{margrave2011gabor}. It is worth noting that the measurement model \eqref{eq: samples} also includes blind sparse spike deconvolution problem \cite{chi2016guaranteed,li2016rapid} in which the point spread function is shared among all point sources.  In particular, when the knowledge of $\{\vg_k\}_{k=1}^r$ is available, blind super-resolution reduces to the super-resolution problem \cite{Armin18Sparse,Bernstein2018DeconvolutionOP}.

Since the number of unknowns in \eqref{eq: samples} is larger than the number of samples, it is an ill-posed problem without any additional constraints on $\vg_k$. To alleviate this issue, it is typically assumed that $\{\vg_k\}_{k=1}^r$ belong to a known low-dimensional subspace spanned by the columns of $\mB\in\C^{n\times s}$ with $s<n$, i.e., 
\begin{align}
	\label{eq: subspace assumption}
	\vg_k = \mB\vh_k,
\end{align}
where $\vh_k\in\C^s$ represents the unknown coefficient of $\vg_k$ in the subspace \cite{chi2016guaranteed, yang2016super, chen2020vectorized}. Under the subspace assumption \eqref{eq: subspace assumption} and applying the lift technique \cite{ahmed2013blind,li2016rapid}, blind super-resolution can be cast as the problem of recovering the matrix $\mX^\natural = \sum_{k=1}^r d_k \vh_k \va_{\tau_k}^\tran\in\C^{s\times n}$ from a set of linear measurements
\begin{align}
	\label{eq: measurements}
	\vy[j] = \la \vb_j\ve_j^\tran, \mX^\natural \ra,\quad j=0,\ldots, n-1, 
\end{align}
where $\va_\tau = \begin{bmatrix}
	1 & e^{-2\pi \imath \tau\cdot 1} &\cdots & e^{-2\pi \imath \tau\cdot (n-1)}
\end{bmatrix}^\tran$, $\vb_j\in\C^s$ is the $j$-th column of $\mB^\tranH$, $\ve_j$ is the $(j+1)$-th standard basis of $\R^n$, and the inner product between two matrices is given by $\la\mA,\mB\ra = \trace(\mA^\tranH\mB)$. The measurement model \eqref{eq: measurements} can be rewritten as a more compact form 
\begin{align}
	\label{eq: linear measurements}
	\vy = \calA(\mX^\natural),
\end{align}
where $\calA:\C^{s\times n}\rightarrow \C^m$ is the linear operator.  When the data matrix $\mX^\natural$ is recovered, the frequencies $\{ \tau_k\}_{k=1}^r$ can be extracted from it via spatial smoothing MUSIC \cite{Evans1982ApplicationOA} and coefficients $\{d_k, \vh_k\}_{k=1}^r$ can be estimated by solving an over--determined linear system \cite{yang2016super}. Therefore in this work we focus on the problem of estimating $\mX^\natural$ from its linear measurements \eqref{eq: linear measurements}.  

Let $\calH$ be the vectorized Hankel lift operator \cite{zhang2018multichannel,chen2020vectorized} which maps a matrix $\mX\in\C^{s\times n}$ into an $sn_1\times n_2$ matrix,
\begin{align}\label{def:vhl}
	\calH(\mX)= \begin{bmatrix}
		\vx_1& \vx_{2} &\cdots &\vx_{n_2}\\
		\vx_{2} & \vx_{3} & \cdots & \vx_{n_{2}+1} \\
		\vdots &\vdots &\ddots &\vdots\\
		\vx_{n_1} &\vx_{n_1+1} &\cdots &\vx_{n},
	\end{bmatrix}\in \mathbb{C}^{s n_{1} \times n_{2}}
\end{align}
and $\vx_i\in\C^s$ is the $i$-th column of $\mX$ and $n_1+n_2 = n+1$. It has been shown that the rank of $\calH(\mX^\natural)$ is at most $r$ and thus it is a low rank matrix when $r\ll \min(sn_1, n_2)$ \cite{chen2020vectorized}. To recover $\mX^\natural$, we seek a rank-$r$ vectorized Hankel matrix consistent with the linear measurements \eqref{eq: linear measurements} by solving the following low rank vectorized Hankel matrix sensing problem
\begin{align}
\label{eq: nonconvex recovery procedure}
\min_{\mX\in\C^{s\times n}} \frac{1}{2} \twonorm{\vy - \calA(\mX)}^2\text{ s.t. }\rank(\calH(\mX)) = r,
\end{align}

Inspired by \cite{CAI2019Fast, zhang2018multichannel}, we develop a non-convex algorithm called Fast Iterative Hard Thresholding via Vectorized Hankel Lift (FIHT-VHL) to solve \eqref{eq: nonconvex recovery procedure}. We apply the low rank structure of the vectorized Hankel matrix associated with the target matrix while Cai et.al.~\cite{CAI2019Fast} investigated the low rank Hankel matrix recovery problem in the context of spectrally sparse recovery. Moreover, the measurement model in this work is different from that in \cite{zhang2018multichannel}. Therefore the theoretical guarantee in \cite{zhang2018multichannel} is not applicable for the blind super-resolution
setting. We show that FIHT-VHL is able to converge linearly to the unknown data matrix with high probability if the number of measurements is of the order $\calO(s^2 r^2\log^2(sn))$ and the algorithm is properly initialized.

\textbf{Related works:} 
The problem of recovering $\mX^\natural$ from \eqref{eq: nonconvex recovery procedure} is also studied in \cite{chen2020vectorized, mao2021projected}. 
Chen et.al.~\cite{chen2020vectorized} developed an nuclear norm minimization method based on the  vectorized Hankel lift. Recently, Mao and Chen \cite{mao2021projected} developed a Projected Gradient Descent via Vectorized Hankel Lift (PGD-VHL) method for the problem \eqref{eq: nonconvex recovery procedure}. Their theoretical results show that the matrix $\mX^\natural$ can be exactly recovered from $\calO(s^2 r^2\log(sn))$ measurements.

Another line of related work addresses the problem of recovering $\mX^\natural$ from \eqref{eq: linear measurements}  \cite{yang2016super,  li2019atomic, suliman2021mathematical}. More specifically, Yang et.al. \cite{yang2016super} proposed an atomic norm minimization method to recover the data matrix. Their theoretical result shows that $\calO(sr\log n)$ measurements are sufficient to guarantee exact recovery of $\mX^\natural$ with high probability under certain incoherence condition. The stable analysis of blind super-resolution is also provided in \cite{li2019atomic,suliman2021mathematical}.  The approaches developed in  \cite{yang2016super,  li2019atomic, suliman2021mathematical} are based on convex relaxation and the equivalent semi-definite programmings are computational inefficient for large-scale problems.






\textbf{Organization:} The remainder of this paper is organized as follows. In Section 2, we will introduce FIHT-VHL algorithm. In Section 3, we will introduce two assumptions and establish our main result. The performance of FIHT-VHL is evaluated by numerical experiments in Section 4. In Section 5, we give the detailed proofs for main result. We close with a conclusion in Section 6.

\textbf{Notations and preliminaries:} Throughout this work, we use bold lowercase letters, bold uppercase letters and calligraphic letters for vectors, matrices and operators, respectively. The letter $\calI$ denotes the identity operator. We use $\vx[i]$ to denote the $i$-th entry of vector $\vx$ and $\mX[j,k]$ to denote the $(j,k)$-th entry of matrix $\mX$. Additionally, we use $\mZ[i:j, k]$ to denote a $j-i+1$ vector with entries $\mZ[i,k],\ldots, \mZ[j,k]$. The adjoint of $\mathcal{H}$, denoted by $\mathcal{H}^{*}$, is a linear mapping from $s n_{1} \times n_{2}$ matrices to matrices of size $s \times n$. In particular, for any matrix $\bZ \in \mathbb{C}^{s n_{1} \times n_{2}}$, the $i$-th column of $\mathcal{H}^{*}(\boldsymbol{Z})$ is given by
\begin{align*}
\calH^\ast(\mZ) \ve_{i}=
\sum_{(j,k)\in\calW_i }\vz_{j, k},
\end{align*} where $\vz_{j,k}= \mZ[js:(j+1)s-1, k]$ and $\calW_i$ is the set \begin{align*}
\left\{(j,k)\mid j+k=i, 0 \leq j \leq n_{1}-1,0 \leq k \leq n_{2}-1\right\}.\end{align*}
Let $\calD:\C^{s\times n}\rightarrow \C^{s\times n}$ be an operator such that $$\calD(\mX)= \mX \diag\left(\sqrt{w_{0} },\ldots, \sqrt{w_{n-1}}\right)$$ for any $\mX$, where the scalar $w_{i}$ is defined as the number of $\calW_i$ for $i=0, \ldots, n-1$. The Moore-Penrose pseudoinverse of $\mathcal{H}$ is given by $\mathcal{H}^{\dagger}=\mathcal{D}^{-2} \mathcal{H}^{*}$ which satisfies $\mathcal{H}^{\dagger} \mathcal{H}=\mathcal{I}$ \cite{chen2020vectorized}.
The adjoint of the operator
$\mathcal{A}(\cdot)$, denoted by  $\mathcal{A}^{*}(\cdot)$, is defined as $\mathcal{A}^{*}(\boldsymbol{y})=\sum_{j=0}^{n-1} \boldsymbol{y}[j] \boldsymbol{b}_{j} \boldsymbol{e}_{j}^\tranH$. Denote $\bZ = \calH{(\bX)}$ and  $\mathcal{G}=\mathcal{H} \mathcal{D}^{-1}$. The adjoint of $\mathcal{G}$, denoted by $\mathcal{G}^{*}$, is given by $\mathcal{G}^{*}=\mathcal{D}^{-1} \mathcal{H}^{*}$. 

Let $\mZ = \mU \bSigma {\mV}^\tranH\in\C^{sn_1\times n_2}$ be the compact singular value decomposition of a rank-$r$ matrix, where $\mU\in\C^{sn_1\times r}$, $\mV \in\C^{n_2\times r}$ and $\bSigma\in\R^{r\times r}$. It is known that the tangent space of the fixed rank-$r$ matrix manifold at $\mZ$ is given by \cite{Vandereycken2013LowRankMC}
\begin{align*}
	\mathfrak{T} = \left\{ \mU\mN^\tranH + \mM{\mV}^\tranH: \mM\in\C^{sn_1\times r}, \mN\in\C^{n_2\times r}\right \}.
\end{align*}
Given any matrix $\mW\in\C^{sn_1\times n_2}$, the projection of $\mW$ onto $\mathfrak{T}$ can be computed using the formula \cite{Vandereycken2013LowRankMC}
\begin{align*}
	\calP_{\mathfrak{T}}(\mW) = \mU{\mU}^\tranH\mW + \mW\mV{\mV}^\tranH - \mU{\mU}^\tranH\mW \mV{\mV}^\tranH.
\end{align*}

\section{Fast Iterative Hard Thresholding via Vectorized Hankel Lift}

We develop a fast iterative hard thresholding method for the problem \eqref{eq: nonconvex recovery procedure}, which is summarized in Algorithm $\bone$. 
The initial guess was obtained with the spectrum method.
In the $t$-th iteration of FIHT-VHL, the current estimate $\bX^t$ is first updated along the gradient descent direction of the objective in \eqref{eq: nonconvex recovery procedure}. Then, the vectorized Hankel matrix corresponding to the update is formed via the application of the vectorized Hankel lift operator $\calH$, followed by a projection operator $\calP_{\mathfrak{T}_t}$ onto the $\mathfrak{T}_t$ space. 
After that, it imposes a hard thresholding operator $\mathcal{T}_{r}$ to $\bW^t$ by truncated SVD process. 
Finally, it applies $\calH^\dagger$ on the low rank matrix $\bZ^{t+1}$.
Indeed, FIHT-VHL algorithm can be efficiently implemented. The authors in \cite{chen2020vectorized} show that $\calH(\mX)$ and $\calH^\dagger(\mZ)$ can be computed by using $\calO(srn\log n)$ flops. Moreover, instead of computing a SVD directly, FIHT-VHL first projects a matrix onto a $2r$-dimensional subspace and then calculates the SVD of a rank-$2r$ matrix, which requires $\calO(r^2sn + r^3)$ flops. Thus the main computational complexity in each step is $\calO(r^2sn+r^3 + srn\log n)$ and therefore
our algorithm is very efficient compared to existing algorithms.

\begin{algorithm}[htbp]\label{algo:fiht}
\caption{FIHT-VHL}  
	\hspace*{0.02in} {\bf Input:} Initialization $\mX^0 = \calH^\dagger\calT_r\calH\calA^\ast(\vy)$.\\
	\hspace*{0.02in} {\bf Output:} $\mX^T$

	\begin{algorithmic}[1]
		\For{$t = 0,1,\dots,T-1$}
		\State $\widetilde{\bX}^t= \bX^t- \alpha \calA^*(\calA(\bX^t)-\by)\label{eq:Xtilde}$ 
		\State $\bW^t =\calP_{\mathfrak{T}_t} \calH(\widetilde{\bX}^t)$
		\State $\bZ^{t+1}=\mathcal{T}_{r}(\bW^t)$
		\State $\bX^{t+1}= \calH^{\dagger}(\bZ^{t+1})$
        \EndFor
	\end{algorithmic}
\end{algorithm}

\section{Main Result}\label{section:main result}

In this section, we establish our main result. To this end, we make two assumptions.

\begin{assumption}\label{assump:1}
 The column vectors $\left\{\boldsymbol{b}_{j}\right\}_{j=0}^{n-1}\subset \C^s$ of the subspace matrix $\boldsymbol{B}^{\tranH}$ are i.i.d random vectors which obey 
 \begin{align*}
    \E{\vb_j\vb_j^\tranH} = \mI_s \text{ and } \max _{0 \leq \ell \leq s-1}|\vb_j[\ell]|^{2} \leq \mu_{0},
 \end{align*}
 for some constant $\mu_0$. Here, $\vb_j[\ell]$ denotes the $\ell$-th entry of $\vb_j$.
\end{assumption}
\begin{remark}
This assumption is standard in compressed sensing \cite{candes2011probabilistic} and blind super-resolution \cite{chi2016guaranteed, yang2016super,li2019atomic,chen2020vectorized}, and holds with $\mu_0=1$ when $\vb$ is uniformly sampled from the rows of a Discrete Fourier Transform (DFT) matrix.
\end{remark}

\begin{assumption}\label{assump:2}
There exists a constant $\mu_{1}>0$ such that
\begin{align*}
\max_{0\leq i \leq n_1-1} \fronorm{\mU_i}^2\leq \frac{\mu_1 r}{n} \text{ and } \max_{0\leq j \leq  n_2-1} \twonorm{\ve_j^\tran \mV}^2\leq \frac{\mu_1 r}{n},
\end{align*}
where the columns of $\mU\in\C^{sn_1\times r}$ and $\mV\in\C^{n_2\times r}$ are the left and right singular vectors of $\mZ^\natural=\calH(\mX^\natural)$ separately, and $\mU_i = \mU[is:(i+1)s-1]$ is the $i$-th block of $\mU$.
\end{assumption}
\begin{remark}
Assumption \ref{assump:2} is commonly used in spectrally sparse signal recovery \cite{cai2015fast,cai2018spectral,chen2014robust} and blind super-resolution \cite{chen2020vectorized,mao2021projected}, and is satisfied when the minimum separation distance between $\{\tau_k\}_{k=1}^r$ is greater than about $1/n$. 
\end{remark}

Now, we are in the position to state our main result, whose proof is deferred to Section \ref{section:proof}.
\begin{theorem}
\label{thm 1}
Under Assumption \ref{assump:1} and \ref{assump:2}, with probability at least $1- c_1n^{-c_2}$, the iterations generated by FIHT-VHL with the initial guess 
$\bX^0 = \calH^\dagger\mathcal{T}_{r} \calH (\calA^*(\by))$ satisfies 
\begin{align}
	\fronorm{\mX^{t} - \mX^\natural} \leq \left( \frac{1}{2} \right)^t \fronorm{\mX^0 - \mX^\natural}
\end{align}
provided the step size $\alpha=1$ and the sample complexity obeys that
\begin{align*}
n\geq   C\kappa^2\mu_0^2\mu_1s^2r^2\log^2(sn)/\varepsilon^2
\end{align*}
where $c_1,c_2$ and $C$ are absolute constants and $\kappa = \sigma_{\max}(\calH{(\mX^\natural)}) / \sigma_{\min}(\calH{(\mX^\natural)})$.
\end{theorem}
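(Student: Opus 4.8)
The plan is to run the analysis in the lifted (Hankel) domain and close an induction whose hypothesis couples a contraction estimate with an incoherence estimate. Write $\mZ^\natural = \calH(\mX^\natural)$, which has rank $r$; let $T_t$ be the tangent space to the rank-$r$ manifold at $\mZ^t := \calH(\mX^t)$; let $\calP_{\calG} := \calG\calG^\ast = \calH\calH^\dagger$ be the projection onto $\Range(\calH)$; and set $\calM := \calH\calA^\ast\calA\calH^\dagger$. Because $\vy = \calA(\mX^\natural)$ and $\mX^t-\mX^\natural = \calH^\dagger(\mZ^t-\mZ^\natural)$ (using $\calH^\dagger\calH = \calI$), the gradient-and-lift line of Algorithm~1 rewrites as $\calH(\widetilde{\mX}^t) = \mZ^t - \calM(\mZ^t-\mZ^\natural)$, and hence $\bW^t = \mZ^t - \calP_{T_t}\calM(\mZ^t-\mZ^\natural)$. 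The induction hypothesis I carry forward is that $\mZ^t$ is close to $\mZ^\natural$ in Frobenius norm and that $\mZ^t$ still satisfies an incoherence bound of the form in Assumption~\ref{assump:2} with a mildly inflated constant, so that the probabilistic estimates below are valid at the moving tangent space $T_t$.

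The probabilistic heart is a \emph{local near-isometry} of $\calM$ centered at its mean. Since $\mathbb{E}[\calA^\ast\calA] = \calI$, one has $\mathbb{E}[\calM] = \calH\calH^\dagger = \calP_{\calG}$; the crucial point is that the propagated errors $\mZ^t-\mZ^\natural$ lie in $\Range(\calG)$, on which $\calP_{\calG}$ acts as the identity, so centering at $\calP_{\calG}$ (rather than at $\calI$, which would be wrong because $T_t\not\subseteq\Range(\calG)$) is exactly what makes the residual mean-zero. I would therefore prove that, under Assumptions~\ref{assump:1}--\ref{assump:2} and the stated measurement count, with high probability
\begin{align*}
\opnorm{\calP_{T}(\calM - \calP_{\calG})\calP_{T}} \le \delta \quad\text{and}\quad \opnorm{\calP_{T}(\calM - \calP_{\calG})\calP_{T^\perp}} \le \delta
\end{align*}
hold simultaneously for every sufficiently incoherent tangent space $T$ near $T^\natural$, with $\delta \lesssim \varepsilon$. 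This is a concentration statement for $\calM - \mathbb{E}[\calM] = \sum_{j} (\text{term}_j)$, a sum of $n$ independent self-adjoint operators indexed by the measurements $\vb_j\ve_j^\tran$; Assumption~\ref{assump:1} controls the per-term operator norm through $\mu_0$, Assumption~\ref{assump:2} controls the variance through $\mu_1$ after restriction to $T$, and the weights $w_i$ entering $\calD$ appear in the bookkeeping. Matrix Bernstein plus a covering/union-bound over the admissible family of tangent spaces yields the $s^2r^2\log^2(sn)/\varepsilon^2$ sample complexity, which is precisely what forces $\delta$ as small as needed.

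Granting this, the one-step contraction is algebraic. Using $\calP_{T_t^\perp}\mZ^t = 0$ and $\calP_{\calG}(\mZ^t-\mZ^\natural) = \mZ^t-\mZ^\natural$, one obtains
\begin{align*}
\bW^t - \mZ^\natural = -\,\calP_{T_t^\perp}\mZ^\natural - \calP_{T_t}(\calM - \calP_{\calG})(\mZ^t-\mZ^\natural),
\end{align*}
so that $\fronorm{\bW^t-\mZ^\natural} \le \fronorm{\calP_{T_t^\perp}\mZ^\natural} + \delta\,\fronorm{\mZ^t-\mZ^\natural} + (\text{higher order})$. The term $\fronorm{\calP_{T_t^\perp}\mZ^\natural}$ is second order, bounded by $\fronorm{\mZ^t-\mZ^\natural}^2/\sigma_{\min}(\mZ^\natural)$ via the standard curvature estimate for the fixed-rank manifold, and under the induction hypothesis is itself $\lesssim \delta\,\fronorm{\mZ^t-\mZ^\natural}$. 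Since $\bZ^{t+1} = \calT_r(\bW^t)$ is a best rank-$r$ approximation and $\mZ^\natural$ has rank $r$, $\fronorm{\bZ^{t+1}-\mZ^\natural}\le 2\fronorm{\bW^t-\mZ^\natural}$; taking the measurement count large enough (hence $\delta$ small) forces $\fronorm{\mZ^{t+1}-\mZ^\natural}\le \tfrac12\fronorm{\mZ^t-\mZ^\natural}$. The statement for $\mX^t$ then follows from $\mX^t-\mX^\natural = \calH^\dagger(\mZ^t-\mZ^\natural)$ together with the relations $\fronorm{\calH^\dagger(\cdot)}\le\fronorm{\cdot}$ (as $\calH^\dagger=\calD^{-1}\calG^\ast$ with $\opnorm{\calD^{-1}}\le 1$) and $\calG$ being an isometry; the $\calD$-weights relating $\fronorm{\mX^t-\mX^\natural}$ and $\fronorm{\mZ^t-\mZ^\natural}$ require care and are part of the bookkeeping.

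Two ingredients remain. The initialization $\mX^0 = \calH^\dagger\calT_r\calH\calA^\ast(\vy)$ must be shown to enter the induction at $t=0$: I would apply the concentration bound to $\calH\calA^\ast\calA(\mX^\natural)-\mZ^\natural$ and then a Davis--Kahan/Wedin perturbation bound to pass from the measured matrix to its rank-$r$ truncation, yielding both the distance and the incoherence bounds. The step I expect to be the main obstacle is \emph{incoherence preservation}: the hard-thresholding $\calT_r$ can in principle destroy the $\mu$-incoherence on which the near-isometry estimate depends, so the induction does not close unless $\mZ^{t+1}$ is shown to remain incoherent. This forces one to upgrade the contraction to a bound in a mixed $\ell_{2,\infty}$-type norm and to control $\calM-\calP_{\calG}$ in that norm uniformly over the admissible, iterate-dependent tangent spaces $T_t$. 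This coupling between the random measurements and the random $T_t$, together with the incoherence bookkeeping, is where the real work and the precise $\kappa^2\mu_0^2\mu_1$ and $\log^2(sn)$ factors live.
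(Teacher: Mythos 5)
Your overall architecture matches the paper's: lift to the Hankel domain, write the iteration as $\mZ^{t+1}=\calT_r\lb\mZ^t-\calG\calAT\calA\calGT(\mZ^t-\mZ^\natural)\rb$, bound the one-step error by the curvature term $\fronorm{\calP_{T_t^\perp}\mZ^\natural}\le\fronorm{\mZ^t-\mZ^\natural}^2/\sigma_{\min}(\mZ^\natural)$ plus a restricted-isometry term, pick up a factor $2$ from $\calT_r$ being a best rank-$r$ approximation, and enter the induction via a spectral initialization (the paper simply cites Lemma 5.2 of mao20projected for that last piece). The place where you part ways --- and where your proposal has a real gap --- is the probabilistic step. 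You propose to prove the near-isometry of $\calM-\calP_{\calG}$ \emph{uniformly over all admissible tangent spaces} near $T$ via covering plus matrix Bernstein, and you then correctly observe that, on that route, you must also show that hard thresholding preserves incoherence of the iterates, which forces an $\ell_{2,\infty}$-type analysis. You leave both of these as ``the real work,'' i.e., unproved; they are genuinely hard (this is leave-one-out territory, not a routine union bound), so as written the induction does not close.

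The paper sidesteps all of this. Its concentration input (Corollary III.9 of chen2020vectorized) is invoked \emph{only at the fixed tangent space} $T$ of $\mZ^\natural$: $\opnorm{\calP_T\lb\calG\calGT-\calG\calAT\calA\calGT\rb\calP_T}\le\varepsilon$. The transfer to the moving space $T_t$ is then entirely deterministic (its Lemma 3): write $\calP_{T_t}=\calP_T+(\calP_{T_t}-\calP_T)$, use $\opnorm{\calP_{T_t}-\calP_T}\le 2\fronorm{\mZ^t-\mZ^\natural}/\sigma_{\min}(\mZ^\natural)$ from wei2016guarantees, and absorb the cross terms with the crude worst-case bound $\opnorm{\calA}\le\sqrt{\mu_0 s}$. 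This is exactly why the induction hypothesis is $\fronorm{\mZ^t-\mZ^\natural}\le\sigma_r(\mZ^\natural)\varepsilon/(16\sqrt{(1+\varepsilon)\mu_0 s})$ --- the $\sqrt{\mu_0 s}$ in the denominator exists to pay for $\opnorm{\calA}$ in the transfer --- and why no incoherence of $\mZ^t$ is ever tracked. If you replace your uniform-concentration step with this fixed-$T$ concentration plus deterministic transfer, your argument closes using only the Frobenius-distance hypothesis you already carry; without that replacement, the two asserted-but-unproved steps are the proof.
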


\begin{remark}
The sample complexity established in \cite{chen2020vectorized} for the Vectorized Hankel Lift is $n\geq c\mu_0 \mu_1\cdot sr\log^4(sn)$. While the sample complexity is sub-optimal dependence on $s$ and $r$, our recovery method requires low per iteration computational complexity. 
\end{remark}
\begin{remark}
It is shown in \cite{mao2021projected} that PGD-VHL can exact recover $\mX^\natural$ when the measurements is of order $\calO(s^2r^2\log^2 (sn))$. Moreover the exact recovery of PGD-VHL relies on a more complicated regularization scheme. 
\end{remark}


\section{Numerical Simulations}

Numerical experiments are conducted to evaluate the performance of FIHT-VHL. In the experiments, the target matrix $\mX^\natural$ is generated by$\mX^\natural = \sum_{k=1}^r d_k \vh_k \va_{\tau_k}^\tran$ and the measurements are obtained by \eqref{eq: samples}, where the locations $\{\tau_{k}\}_{k=1}^r$ are generated from a standard uniform distribution  $U(0,1)$, and the amplitudes $\{d_k\}_{k=1}^r$ are generated via $d_k = (1+10^{c_k})e^{-i\psi_k}$ where $\psi_k$ follows $U(0,2\pi)$ and $c_k$ follows $U(0,1)$. Each row of subspace matrix $\mB$ is uniformly sampled from the rows of a Discrete Fourier Transform matrix. The coefficient vectors $\{\bh_k\}_{k=1}^r$ are generated from a standardized multivariate Gaussian distribution  $MVN_s(0,I_{s\times s})$, where $I_{s\times s}$ is the identity matrix.


In the first experiment, we study the convergence rate of FIHT-VHL under different number of observations and compare it with PGD-VHL \cite{mao2021projected}. The step size for FIHT-VHL is given by 
\begin{equation}\label{eq: step size}
\alpha = \frac{\|\calP_{T_t}\calG(\by - \calA(\bX^t))\|_2^2}{\|\calA\calG^*\calP_{T_t}(\by - \calA(\bX^t))\|_2^2}
\end{equation}
in each iteration and for PGD-VHL, the step size is chosen using the line search method. 
We set dimensions of subspaces $s = 4$ and the number of point sources $r = 4$. Fig.~\ref{fig:convergence123} presents the logarithmic recovery error $\log_{10}\fronorm{\mX_t - \mX^\natural}/\fronorm{\mX^\natural}$ with respect to the number of iterations. Fig.~\ref{fig:convergence123} shows that FIHT-VHL converges linearly which is in accordance with our main theorem. Compared to PGD-VHL \cite{mao2021projected}, FIHT-VHL requires fewer number of iterations to achieve convergence when $s=r=4$.  

\begin{figure}[htbp]
\begin{center}
\includegraphics[width=0.55\textwidth]{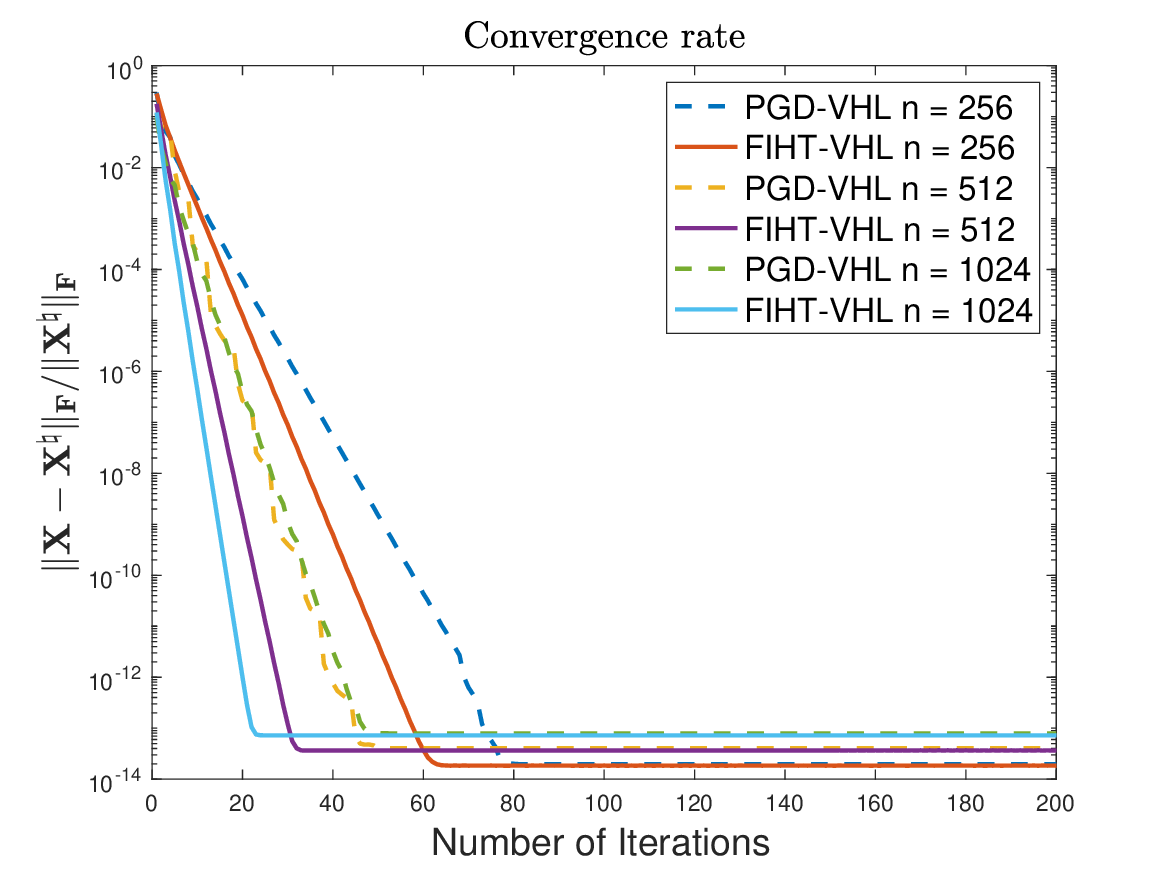} 
\caption{Convergences of FIHT-VHL and PGD-VHL for $n=256, 512, 1024$ when $s=4$ and $r=4$.}
\label{fig:convergence123}
\end{center}
\end{figure}

In the second experiment, we conduct tests to illustrate the robustness of FIHT-VHL to additive noise. More specifically, we add noise vector $\be = \sigma_{\be}\cdot \|\by\|_2\cdot\frac{\boldsymbol{w}}{\|\boldsymbol{w}\|_2}$ to the measurements where $\by$ is the noiseless observations \eqref{eq: measurements}, $\sigma_{\be}$ denotes the noise level and  $\boldsymbol{w}$ is the standard Gaussian vector with i.i.d entries. In the tests, the noise level $\sigma_{\be}$ is evenly spaced from $10^{-5}$ to $10^{-3}$, corresponding to the signal-to-noise ratio (SNR) from 100 to 60 dB. For each noise level, 10 random trails are conducted with $s=r=2$. We choose $n = 128$ and $ n = 256$ for the number of measurements. In Fig.~\ref{fig:robustness}, we demonstrate  the linear relationship between the average relative reconstruction error and the noise level. It can be seen that the relative recovery error decreases with the
increase of the number of measurements.

\begin{figure}[htbp]
\begin{center}
\includegraphics[width=0.55\textwidth]{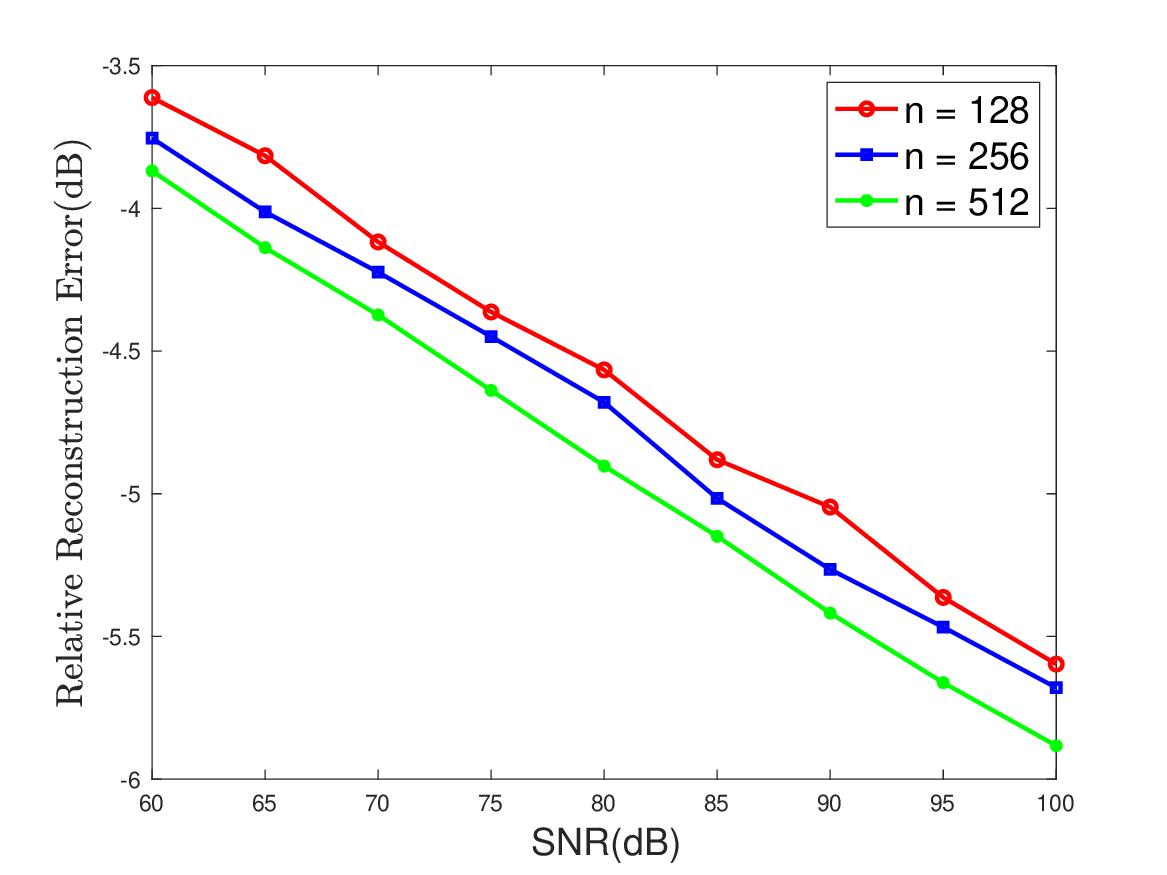} 
\caption{Performance of FIHT-VHL under different noise levels when $s=r=2$.}
\label{fig:robustness}
\end{center}
\end{figure}


\section{Proof of Main Result}\label{section:proof}

We first introduce three auxiliary lemmas that will be used in our proof.

\begin{lemma}[{\cite[Corollary III.9]{chen2020vectorized}}]
\label{lemma 5.1}
Suppose $n\geq C\varepsilon^{-2}  \mu_0 \mu_1 sr\log(sn)$. The event
\begin{align}
	\label{ineq: local rip}
	\opnorm{ \calP_{T} \lb \calG\calGT -  \calG\calAT \calA \calGT \rb \calP_{T}  }  \leq \varepsilon
\end{align} 
occurs with probability at least $1-c_1n^{-c_2}$.
\end{lemma}

\begin{lemma}[{\cite[Lemma 5.2]{mao2021projected}}]
\label{lemma initial}
Suppose that $n\geq C\varepsilon^{-2}\kappa^2 \mu_0^2\mu_1 s^2 r^2\log^2(sn)$. Then with probability at least $1-c_1n^{-c_2}$, the initialization $\mZ_0 = \calH(\mX^0)$ obeys 
\begin{align*}
\opnorm{\mZ^0 - \mZ^\natural}\leq \frac{\sigma_{r}(\mZ^\natural)\varepsilon}{16\sqrt{(1+\varepsilon) \mu_0s}}.
\end{align*}
\end{lemma}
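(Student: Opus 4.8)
The plan is to reduce the spectral-norm error of the spectral initialization to the deviation $\opnorm{\calH(\calAT\calA-\calI)(\mX^\natural)}$ of a single random operator applied to the signal, and then to control that deviation by a direct matrix Bernstein argument.

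First I would unfold $\mZ^0$. Since $\mX^0=\calH^\dagger\calT_r\calH\calAT(\vy)$ and $\vy=\calA(\mX^\natural)$, applying $\calH$ gives $\mZ^0=\calH\calH^\dagger\calT_r(\mW)$ with $\mW:=\calH\calAT\calA(\mX^\natural)$. Two structural facts drive the reduction. First, $\calH\calH^\dagger=\calG\calGT$ is the orthogonal projection onto $\Range(\calH)$; being an anti-diagonal averaging, it is non-expansive in spectral norm. Second, both $\calD$ and $\calAT\calA$ act column-wise — $\calD$ by the scalars $\sqrt{w_j}$ and $\calAT\calA$ by the left multiplications $\vb_j\vb_j^\tranH$ — so they commute, whence $\calH\calAT\calA\calH^\dagger=\calG\calAT\calA\calGT$ and $\mW=\calG\calAT\calA\calGT\mZ^\natural$ (using $\calH^\dagger\mZ^\natural=\mX^\natural$). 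Using $\mZ^\natural=\calG\calGT\mZ^\natural$, that $\mZ^\natural$ has rank $r$, and the optimality of truncation ($\opnorm{\calT_r(\mW)-\mW}=\sigma_{r+1}(\mW)\le\opnorm{\mW-\mZ^\natural}$), I get
\begin{align*}
\opnorm{\mZ^0-\mZ^\natural}\le\opnorm{\calT_r(\mW)-\mZ^\natural}\le 2\opnorm{\mW-\mZ^\natural}=2\opnorm{\calH(\calAT\calA-\calI)(\mX^\natural)}.
\end{align*}

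Next I would expand the residual into independent mean-zero pieces. With $\vx_j^\natural$ the $j$-th column of $\mX^\natural$, $(\calAT\calA-\calI)(\mX^\natural)=\sum_{j=0}^{n-1}(\vb_j\vb_j^\tranH-\mI_s)\vx_j^\natural\ve_j^\tran$, so
\begin{align*}
\calH(\calAT\calA-\calI)(\mX^\natural)=\sum_{j=0}^{n-1}\mS_j,\qquad \mS_j:=\calH\big((\vb_j\vb_j^\tranH-\mI_s)\vx_j^\natural\ve_j^\tran\big).
\end{align*}
By Assumption \ref{assump:1}, $\E{\vb_j\vb_j^\tranH}=\mI_s$, so each $\mS_j$ is mean zero, and the $\mS_j$ are independent across $j$. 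The lift of a one-column matrix is benign: $\calH(\va\ve_j^\tran)$ places $\va$ on the block anti-diagonal $\{(p,q):p+q=j\}$, occupying distinct block-rows and columns, so $\opnorm{\calH(\va\ve_j^\tran)}=\twonorm{\va}$. This sets up a matrix Bernstein inequality for $\sum_j\mS_j$. The two ingredients are a uniform bound and the variance proxy: the almost-sure bound $\max_\ell|\vb_j[\ell]|^2\le\mu_0$ gives $\twonorm{\vb_j}^2\le\mu_0 s$, hence $\opnorm{\mS_j}=\twonorm{(\vb_j\vb_j^\tranH-\mI_s)\vx_j^\natural}\le(1+\mu_0 s)\twonorm{\vx_j^\natural}$ almost surely, so no truncation is needed, while the variance is obtained by evaluating $\E{\mS_j\mS_j^\tranH}$ and $\E{\mS_j^\tranH\mS_j}$ through the fourth moment $\E{(\vb_j\vb_j^\tranH-\mI_s)\vx_j^\natural(\vx_j^\natural)^\tranH(\vb_j\vb_j^\tranH-\mI_s)}$, each bounded by $O(\mu_0 s)\twonorm{\vx_j^\natural}^2$. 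Summing over $j$ and converting the per-block energy of $\mU$ and per-column energy of $\mV$ (both $\le\mu_1 r/n$ by Assumption \ref{assump:2}) into bounds on $\twonorm{\vx_j^\natural}^2$ and on $\fronorm{\mX^\natural}^2\lesssim r\sigma_{\max}^2(\mZ^\natural)=\kappa^2 r\sigma_r^2(\mZ^\natural)$ yields a variance proxy scaling, up to constants, like $\mu_0 s\,\kappa^2\mu_1 r^2\sigma_r^2(\mZ^\natural)/n$. Demanding $2\opnorm{\sum_j\mS_j}\le\sigma_r(\mZ^\natural)\varepsilon/(16\sqrt{(1+\varepsilon)\mu_0 s})$ with failure probability $n^{-c_2}$ then forces $n\ge C\varepsilon^{-2}\kappa^2\mu_0^2\mu_1 s^2 r^2\log^2(sn)$, the stated condition.

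The crux is this last step. The reduction is lossless, but the target accuracy $\sigma_r(\mZ^\natural)\varepsilon/\sqrt{\mu_0 s}$ is far sharper than a black-box use of the local restricted isometry bound \eqref{ineq: local rip} would give: that bound controls $\calP_T(\calG\calGT-\calG\calAT\calA\calGT)\calP_T$ only up to $\varepsilon\fronorm{\cdot}$, which would cost an extra $\sqrt r$ and replace $\sigma_r$ by $\sigma_{\max}$. Achieving the clean $\sigma_r/\sqrt{\mu_0 s}$ scaling therefore forces the direct, term-by-term Bernstein analysis above, in which the fourth-moment variance computation and the sharp conversion of the incoherence of Assumption \ref{assump:2} into column-wise energy bounds — rather than any single named inequality — are where the real work lies.
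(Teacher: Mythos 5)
The paper itself contains no proof of this lemma --- it is imported verbatim from \cite[Lemma 5.2]{mao20projected} --- so the fair comparison is with that cited argument, and your overall architecture reproduces it: pass to $\mW=\calG\calAT\calA\calGT(\mZ^\natural)=\calH\calA^\ast\calA(\mX^\natural)$ (your commutation of $\calD$ with the column-wise operator $\calA^\ast\calA$ is correct), pay a factor $2$ for the rank-$r$ truncation via $\sigma_{r+1}(\mW)\le\opnorm{\mW-\mZ^\natural}$, and control $\opnorm{\mW-\mZ^\natural}=\opnorm{\sum_j\mS_j}$ by matrix Bernstein over the independent mean-zero column lifts, with $\opnorm{\calH(\va\ve_j^\tran)}=\twonorm{\va}$ and Assumption \ref{assump:2} feeding the bound $\twonorm{\vx_j^\natural}\lesssim\mu_1 r\,\sigma_{\max}(\mZ^\natural)/n$. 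This is exactly how the source obtains $\opnorm{\calT_r(\mW)-\mZ^\natural}\lesssim\sqrt{\mu_0\mu_1 sr\log^2(sn)/n}\;\opnorm{\mZ^\natural}$, which gives the lemma after inserting $\opnorm{\mZ^\natural}=\kappa\sigma_r(\mZ^\natural)$ and the assumed sample size (your bookkeeping lands on $\mu_1^2$ and a single $\log$ where the citation has $\mu_1\log^2$, but that is acceptable sketch-level slack).

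There is, however, one genuinely false step: the claim that $\calH\calH^\dagger=\calG\calGT$, ``being an anti-diagonal averaging, is non-expansive in spectral norm.'' Orthogonal projections are non-expansive in the Frobenius norm, not the spectral norm, and this particular projection provably expands it. Take $s=1$, $n_1=n_2=3$ and the rank-one matrix $\mM=\frac{1}{2}(\ve_0+\ve_1)(\ve_1+\ve_2)^\tran$ with $\nucnorm{\mM}=1$; its anti-diagonal averages are $(0,\tfrac14,\tfrac13,\tfrac14,0)$, so
\begin{align*}
\calG\calGT(\mM)=\begin{pmatrix} 0 & \tfrac14 & \tfrac13\\ \tfrac14 & \tfrac13 & \tfrac14\\ \tfrac13 & \tfrac14 & 0\end{pmatrix},
\end{align*}
whose eigenvalues are $\tfrac13+\tfrac{\sqrt2}{4}$, $\tfrac13-\tfrac{\sqrt2}{4}$ and $-\tfrac13$, giving $\nucnorm{\calG\calGT(\mM)}=\tfrac{\sqrt2}{2}+\tfrac13\approx 1.04>1$. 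Since $\calG\calGT$ is self-adjoint, its norm as a map in the nuclear norm equals its norm in the spectral norm, so there exists $\mM'$ with $\opnorm{\calG\calGT(\mM')}>\opnorm{\mM'}$, and your ``lossless'' reduction $\opnorm{\calG\calGT(\calT_r(\mW)-\mZ^\natural)}\le\opnorm{\calT_r(\mW)-\mZ^\natural}$ is unjustified. The gap is repairable in two standard ways. First, exploit the low rank you already have: $\calT_r(\mW)-\mZ^\natural$ has rank at most $2r$, so
\begin{align*}
\opnorm{\calG\calGT\lb\calT_r(\mW)-\mZ^\natural\rb}\le\fronorm{\calT_r(\mW)-\mZ^\natural}\le\sqrt{2r}\,\opnorm{\calT_r(\mW)-\mZ^\natural}\le 2\sqrt{2r}\,\opnorm{\mW-\mZ^\natural},
\end{align*}
and the extra $\sqrt{2r}$ is exactly what the $r^2$ (rather than $r$) in the hypothesis $n\gtrsim\varepsilon^{-2}\kappa^2\mu_0^2\mu_1 s^2r^2\log^2(sn)$ absorbs --- so your insistence on a lossless reduction is not only unproven but unnecessary. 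Second, one can sidestep $\calG\calGT$ entirely by noting that the paper's convergence proof actually runs its induction on the truncated matrices, with $\mZ^0=\calT_r(\calH\calA^\ast(\vy))=\calT_r(\mW)$, which is precisely the object \cite{mao20projected} bounds.
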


\begin{lemma}\label{lemma:1}
Suppose that
\begin{align}
	\label{condition 1}
\fronorm{\mZ^t - \mZ^\natural}  \leq \frac{\sigma_{r}(\mZ^\natural)\varepsilon}{16\sqrt{(1+\varepsilon)\cdot \mu_0 s}}.
\end{align}
Conditioned on \eqref{ineq: local rip}, one has
\begin{align}
	\label{ineq: one side rip}
	\opnorm{ \calA \calGT\calP_{\mathfrak{T}_t}}& \leq 3\sqrt{ 1+\varepsilon},\\
	\label{ineq: local rip at t step}
	\opnorm{\calP_{\mathfrak{T}_t}\calG\lb  \calI - \calA^\ast\calA\rb\calGT\calP_{\mathfrak{T}_t} } &\leq 2\varepsilon.
\end{align}
\end{lemma}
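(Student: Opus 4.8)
The plan is to control both bounds through the single self‑adjoint operator $\calM:=\calG(\calI-\calA^\ast\calA)\calGT$, whose restriction to the ground‑truth tangent space $T$ satisfies $\opnorm{\calP_T\calM\calP_T}\leq\varepsilon$ by \eqref{ineq: local rip} (note $\calG\calGT-\calG\calA^\ast\calA\calGT=\calM$), and to transfer this to the nearby tangent space $T_t$. Two deterministic facts do the bookkeeping. First, since $\calGT\calG=\calI$, the map $\calG$ is an isometry, so $\opnorm{\calG}=\opnorm{\calGT}=1$ and $\calG\calGT$ is an orthogonal projection with $\opnorm{\calG\calGT}\leq 1$. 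Second, Assumption \ref{assump:1} gives $\twonorm{\vb_j}^2\leq\mu_0 s$, hence $\opnorm{\calA(\mX)}^2=\sum_j|\la\vb_j\ve_j^\tran,\mX\ra|^2\leq\mu_0 s\,\fronorm{\mX}^2$, i.e. $\opnorm{\calA}\leq\sqrt{\mu_0 s}$ and therefore $\opnorm{\calA\calGT}\leq\sqrt{\mu_0 s}$. Expanding $\calG\calA^\ast\calA\calGT=\calG\calGT-\calM$ and projecting onto $T_t$ gives $\opnorm{\calA\calGT\calP_{T_t}}^2=\opnorm{\calP_{T_t}\calG\calA^\ast\calA\calGT\calP_{T_t}}\leq\opnorm{\calP_{T_t}\calG\calGT\calP_{T_t}}+\opnorm{\calP_{T_t}\calM\calP_{T_t}}\leq 1+\opnorm{\calP_{T_t}\calM\calP_{T_t}}$, so once \eqref{ineq: local rip at t step} is proved, \eqref{ineq: one side rip} follows from $\sqrt{1+2\varepsilon}\leq 3\sqrt{1+\varepsilon}$. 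Running the same identity over $T$ yields the auxiliary bound $\opnorm{\calA\calGT\calP_T}\leq\sqrt{1+\varepsilon}$.

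Next I would set $\Delta:=\calP_{T_t}-\calP_T$ and invoke a Wedin/Davis–Kahan‑type perturbation bound for the tangent projectors of rank‑$r$ matrices, $\opnorm{\Delta}\leq 2\fronorm{\mZ^t-\mZ^\natural}/\sigma_{r}(\mZ^\natural)$, so that hypothesis \eqref{condition 1} gives $\opnorm{\Delta}\leq\varepsilon/\bigl(8\sqrt{(1+\varepsilon)\mu_0 s}\bigr)$; the decisive feature is the $1/\sqrt{\mu_0 s}$ scaling. Using $\calP_{T_t}\calM\calP_{T_t}-\calP_T\calM\calP_T=\Delta\calM\calP_{T_t}+\calP_T\calM\Delta$, I bound the two cross terms by splitting $\calM=\calG\calGT-\calG\calA^\ast\calA\calGT$. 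The $\calG\calGT$ pieces each contribute at most $\opnorm{\Delta}$ (projection norm $\leq 1$). For $\calP_T\calM\Delta$ the measurement piece factors as $\opnorm{\calP_T\calG\calA^\ast}\,\opnorm{\calA\calGT\Delta}\leq\sqrt{1+\varepsilon}\cdot\sqrt{\mu_0 s}\,\opnorm{\Delta}=\varepsilon/8$, where the crude $\sqrt{\mu_0 s}$ is exactly cancelled by $\opnorm{\Delta}$; this makes $\opnorm{\calP_T\calM\Delta}\leq\varepsilon/4$. For $\Delta\calM\calP_{T_t}$ the measurement piece factors instead as $\opnorm{\calA\calGT\Delta}\,\opnorm{\calA\calGT\calP_{T_t}}\leq\sqrt{\mu_0 s}\,\opnorm{\Delta}\cdot\opnorm{\calA\calGT\calP_{T_t}}=\tfrac{\varepsilon}{8\sqrt{1+\varepsilon}}\,\opnorm{\calA\calGT\calP_{T_t}}$.

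The last term is the crux, and it is where the main obstacle lies: it still carries the unknown $a:=\opnorm{\calA\calGT\calP_{T_t}}$, for which only the crude deterministic estimate $a\leq\opnorm{\calA\calGT}\leq\sqrt{\mu_0 s}$ is available a priori — no restricted isometry holds on the directions of $T_t$ orthogonal to $T$, since \eqref{ineq: local rip} only controls $\calM$ on the fixed subspace $T$ — and plugging that in would produce a useless factor $\sqrt{\mu_0 s}\,\varepsilon$. The resolution is to couple the two quantities rather than bound them separately: the estimates above give $m:=\opnorm{\calP_{T_t}\calM\calP_{T_t}}\leq\tfrac{3}{2}\varepsilon+\tfrac{\varepsilon}{8}a$, while the first paragraph supplies $a^2\leq 1+m$. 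Substituting one into the other yields a single quadratic inequality, $a^2\leq 1+\tfrac{3}{2}\varepsilon+\tfrac{\varepsilon}{8}a$, which for $\varepsilon$ below an absolute constant forces $a\leq 3\sqrt{1+\varepsilon}$ (this is \eqref{ineq: one side rip}) and, fed back, $m\leq 2\varepsilon$ (this is \eqref{ineq: local rip at t step}). In short, the crude $\sqrt{\mu_0 s}$ blow‑up of $\calA$ off the tangent space is precisely matched by the $1/\sqrt{\mu_0 s}$ budget in \eqref{condition 1}, and the quadratic relation $a^2\leq 1+m$ is what lets the argument close.
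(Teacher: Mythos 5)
Your proposal is correct, and it rests on the same ingredients as the paper's proof: the local RIP \eqref{ineq: local rip} on the fixed tangent space $T$, the Wedin-type bound $\opnorm{\calP_{T_t}-\calP_T}\leq 2\fronorm{\mZ^t-\mZ^\natural}/\sigma_r(\mZ^\natural)$, the crude estimate $\opnorm{\calA}\leq\sqrt{\mu_0 s}$, and the isometry $\calGT\calG=\calI$. The difference is logical structure. The paper proves \eqref{ineq: one side rip} first and independently, by the one-line triangle inequality $\opnorm{\calA\calGT\calP_{T_t}}\leq\opnorm{\calA\calGT\calP_T}+\opnorm{\calA}\cdot\opnorm{\calP_{T_t}-\calP_T}\leq\sqrt{1+\varepsilon}+\varepsilon/(8\sqrt{1+\varepsilon})$, and then feeds that bound into the five-term decomposition for \eqref{ineq: local rip at t step}. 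You instead treat $a=\opnorm{\calA\calGT\calP_{T_t}}$ and $m=\opnorm{\calP_{T_t}\calM\calP_{T_t}}$ as coupled unknowns, derive $m\leq\tfrac{3}{2}\varepsilon+\tfrac{\varepsilon}{8}a$ and $a^2\leq 1+m$, and close a quadratic inequality. This bootstrap is valid and yields the same constants, but the "main obstacle" you identify — that only $a\leq\sqrt{\mu_0 s}$ is available a priori — is not actually an obstacle: the same splitting $\calP_{T_t}=\calP_T+\Delta$ you use for the cross terms already gives $a\leq\sqrt{1+\varepsilon}+\sqrt{\mu_0 s}\,\opnorm{\Delta}$ directly, which is what the paper does and which removes the need for the self-consistency argument. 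So your route is correct but strictly more machinery than necessary; the paper's sequential version is the simplification you were one step away from.
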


Now we are in the position to prove our main result. We rewrite the iteration in FIHT-VHL as
\begin{align}
	\mZ^{t+1}= \calT_r \lb \mZ^t -\calG\calAT \calA \calGT\lb \mZ^t - \mZ^\natural \rb\rb .
\end{align}
For ease of exposition, we will prove lemmas and theorems in this section in terms of $\bZ^{t}$ and $\bZ^{\natural}$ but note that results in terms of $\bX^{t}$ and $\bX^{\natural}$ follows immediately due to
\begin{align}
\fronorm{\mX^{t}-\mX^{\natural}} = \fronorm{\calH^\dagger(\mZ^{t}-\mZ^{\natural})}\leq \fronorm{\mZ^{t}-\mZ^{\natural}}.
\end{align}
Motivated by \cite{CAI2019Fast}, we prove our main result by induction. 
When $t=0$, the condition \eqref{condition 1} holds by Lemma \ref{lemma initial}. Assume in the $t$-th step, linear convergence holds, we need to prove that in $(t+1)$-th step, linear convergence also holds.  

Denote 
$$\mW^t = \calP_{\mathfrak{T}_t} \lb \mZ^t -  \calG\calAT \calA \calGT\lb \mZ^t - \mZ^\natural \rb \rb.$$
We have that $\mZ^{t+1} = \calT_r(\mW^t)$ and  $\fronorm{\mZ^{t+1}- \mZ^\natural} $ can be bounded as follows:
\begin{align*}    
\fronorm{\mZ^{t+1}- \mZ^\natural} & \leq \fronorm{\mZ^{t+1} - \mW^t } +\fronorm{\mW^t - \mZ^\natural}\notag\\
& \leq 2\fronorm{ \calP_{\mathfrak{T}_t} \lb \mZ^t -\mZ^\natural - \calG\calAT \calA \calGT\lb \mZ^t - \mZ^\natural \rb \rb }  + 2\fronorm{ \lb \calI -  \calP_{\mathfrak{T}_t}  \rb \lb \mZ^\natural \rb  } \\
 	& = 2\big\| \calP_{\mathfrak{T}_t} \big( (1-\eta)\lb \mZ^t -\mZ^\natural \rb + \eta \lb \mZ^t -\mZ^\natural  \rb -\eta\calG\calAT \calA \calGT\lb \mZ^t - \mZ^\natural \rb \big) \big\|_{\mathsf{F}}\\
 	&\quad + 2\big\| \lb \calI -  \calP_{\mathfrak{T}_t}  \rb \lb \mZ^\natural \rb  \big\|_{\mathsf{F}} \\
 	& \leq 2(1-\eta) \fronorm{ \calP_{\mathfrak{T}_t} \lb \mZ^t -\mZ^\natural \rb }+ 2\eta \fronorm{ \calP_{\mathfrak{T}_t} \lb \calG\calGT - \calG\calAT \calA \calGT \rb \lb \mZ^t - \mZ^\natural \rb  } \\
 	&\quad + 2\fronorm{ \lb \calI -  \calP_{\mathfrak{T}_t}  \rb \lb \mZ^\natural \rb  } \\
	& \leq2\fronorm{ \lb \calI -  \calP_{\mathfrak{T}_t}  \rb \lb \mZ^\natural \rb  } +2 \fronorm{ \calP_{\mathfrak{T}_t} \lb \calG\calGT - \calG\calAT \calA \calGT \rb \calP_{\mathfrak{T}_t}  \lb \mZ^t - \mZ^\natural \rb  }  \\
	&  \quad+  2 \fronorm{ \calP_{\mathfrak{T}_t} \lb \calG\calGT -  \calG\calAT \calA \calGT \rb \lb \calI -  \calP_{\mathfrak{T}_t} \rb \lb \mZ^t - \mZ^\natural \rb  } \\
	& \leq 2\fronorm{ \lb \calI -  \calP_{\mathfrak{T}_t}  \rb \lb \mZ^t-\mZ^\natural \rb }+2  \fronorm{ \calP_{\mathfrak{T}_t} \lb \calG\calGT -  \calG\calAT \calA \calGT \rb \calP_{\mathfrak{T}_t}  \lb \mZ^t - \mZ^\natural \rb}\\
	& \quad+2  \fronorm{ \calP_{\mathfrak{T}_t}   \calG\calGT  \lb \calI -  \calP_{\mathfrak{T}_t} \rb \lb \mZ^t - \mZ^\natural \rb  } \\
	&\quad +2  \fronorm{\calP_{\mathfrak{T}_t} \calG\calAT\calA\calGT \lb \calI -  \calP_{\mathfrak{T}_t} \rb\lb \mZ^t - \mZ^\natural \rb  }\\
	& \triangleq I_1+I_2+I_3+I_4.
\end{align*}
We bound $I_1,I_2,I_3$ and $I_4$, respectively. Applying \cite[Lemma 4.1]{wei2016guarantees} yields that
\begin{align*}
I_1 \leq 2\fronorm{\mZ^t - \mZ^\natural} ^2/ \sigma_{r}(\mZ^\natural), \quad I_3 \leq 2\fronorm{\mZ^t - \mZ^\natural} ^2/ \sigma_{r}(\mZ^\natural)
\end{align*}
For $I_2$, a simple computation yields that
\begin{align*}
    I_2 &\leq 2 \opnorm{\calP_{\mathfrak{T}_t} \lb \calG\calGT -  \calG\calAT \calA \calGT \rb \calP_{\mathfrak{T}_t} } \cdot \fronorm{ \mZ^t - \mZ^\natural}\\
    & \leq 2\varepsilon \fronorm{\mZ^t - \mZ^\natural}
\end{align*}
where the last line is due to Lemma \ref{lemma 5.1}. Finally, Lemma \ref{lemma:1} implies that 
\begin{align*}
I_4 \leq 3 \sqrt{1+\varepsilon}\cdot \fronorm{\mZ^t - \mZ^\natural} ^2/ \sigma_{r}(\mZ^\natural).
\end{align*}
Combining these terms together, we have
\begin{align}
	\fronorm{\mZ^{t+1} -\mZ^\natural}  &{\leq} \left( 2 \varepsilon + \frac{4+ 3\sqrt{1+\varepsilon}}{\sigma_r(\mZ^\natural)} \fronorm{\mZ^t - \mZ^\natural}\right)\fronorm{\mZ^t - \mZ^\natural} \notag \\
	&{\leq} \lb  2\varepsilon +  \frac{4+3\sqrt{1+\varepsilon}}{16\sqrt{(1+\varepsilon)\mu_0s}} \cdot \varepsilon \rb  \cdot \fronorm{ \mZ^t -\mZ^\natural  } \label{label:(a)} \\
	&{\leq} 3\varepsilon\fronorm{ \mZ^t -\mZ^\natural  }\notag \\
	&\leq (1-\eta + 4\varepsilon)  \cdot \fronorm{ \mZ^t -\mZ^\natural  } \\
	&\leq ( 1 - \eta + \frac{3\eta}{4}) \fronorm{ \mZ^t -\mZ^\natural  }\\
	&{\leq} \frac{1}{2}\fronorm{ \mZ^t -\mZ^\natural  }\label{label:(b)},
\end{align}
where \eqref{label:(a)} follows from \eqref{condition 1} and \eqref{label:(b)} is due to $\varepsilon \leq  1/6$. Since $\fronorm{\mZ^t - \mZ^\natural}$ is a contractive sequence following from \eqref{label:(b)}, the inequality \eqref{condition 1} holds for all $t\geq 0$ by induction. Thus we complete the proof. 

\subsection{Proof of Lemma \ref{lemma:1}}
For any $\mZ\in\C^{sn_1\times n_2}$, we have
\begin{align*}
	\fronorm{\calA \calGT\calP_{\mathfrak{T}}(\mZ)}^2 &=  \la  \calA \calGT\calP_{\mathfrak{T}}(\mZ),  \calA \calGT\calP_{\mathfrak{T}}(\mZ) \ra \\
	&=\la\calGT\calP_{\mathfrak{T}}(\mZ),   \calAT\calA \calGT\calP_{\mathfrak{T}}(\mZ) \ra \\
	&= \la \mZ,    \calP_{\mathfrak{T}}\calG\lb \calAT\calA  - \calI \rb\calGT\calP_{\mathfrak{T}}(\mZ) \ra + \la \mZ, \calP_{\mathfrak{T}}  \calG\calGT\calP_{\mathfrak{T}}(\mZ) \ra\\
	&\leq  (1+\varepsilon)\fronorm{\mZ}^2,
\end{align*}
where the last inequality is due to Lemma III in \cite{chen2020vectorized}. So it follows that $\opnorm{ \calA \calGT\calP_{\mathfrak{T}}} \leq \sqrt{ 1+\varepsilon }$
and 
\begin{align*}
	\opnorm{ \calA \calGT\calP_{\mathfrak{T}_t}} &\leq \opnorm{ \calA \calGT\calP_{\mathfrak{T}}}  + \opnorm{ \calA \calGT\lb \calP_{\mathfrak{T}_t} - \calP_{\mathfrak{T}}\rb }\\
	&\leq    \sqrt{ 1+\varepsilon} + \opnorm{\calA}\cdot \opnorm{ \calP_{\mathfrak{T}_t} - \calP_{\mathfrak{T}} } \\
	&\leq \sqrt{ 1+\varepsilon} +  \frac{2\sqrt{\mu_0s}  \fronorm{\mZ^t - \mZ^\natural}  }{ \sigma_{\min}(\mZ^\natural) }\\
	&\leq 3\sqrt{ 1+\varepsilon}.
\end{align*}
Finally, a straightforward computation yields that
\begin{align}
	\opnorm{\calP_{\mathfrak{T}_t}\calG\lb  \calI - \calA^\ast\calA\rb\calGT\calP_{\mathfrak{T}_t} } &\leq \opnorm{ \lb\calP_{\mathfrak{T}_t} - \calP_{\mathfrak{T}}\rb \calG\lb  \calI - \calA^\ast\calA\rb\calGT\calP_{\mathfrak{T}_t} } + \opnorm{   \calP_{\mathfrak{T}} \calG\lb  \calI - \calA^\ast\calA\rb\calGT\calP_{\mathfrak{T}_t}}\notag\\
	& \leq \opnorm{ \lb\calP_{\mathfrak{T}_t} - \calP_{\mathfrak{T}}\rb \calG\lb  \calI - \calA^\ast\calA\rb\calGT\calP_{\mathfrak{T}_t} } \\
	&\quad + \opnorm{   \calP_{\mathfrak{T}} \calG\lb  \calI - \calA^\ast\calA\rb\calGT\lb \calP_{\mathfrak{T}_t} - \calP_{\mathfrak{T}} \rb} \notag \\
	&\quad + \opnorm{   \calP_{\mathfrak{T}} \calG\lb  \calI - \calA^\ast\calA\rb\calGT  \calP_{\mathfrak{T}} }\notag\\
	&  \leq \opnorm{ \lb\calP_{\mathfrak{T}_t} - \calP_{\mathfrak{T}}\rb \calG\calGT\calP_{\mathfrak{T}_t} }+  \opnorm{   \calP_{\mathfrak{T}} \calG\calGT\lb \calP_{\mathfrak{T}_t} - \calP_{\mathfrak{T}} \rb} \notag\\
	&\quad +\opnorm{ \lb\calP_{\mathfrak{T}_t} - \calP_{\mathfrak{T}}\rb \calG \calA^\ast\calA \calGT\calP_{\mathfrak{T}_t} }\notag\\
	& \quad+ \opnorm{   \calP_{\mathfrak{T}} \calG  \calA^\ast\calA \calGT\lb \calP_{\mathfrak{T}_t} - \calP_{\mathfrak{T}} \rb} +\opnorm{   \calP_{\mathfrak{T}} \calG\lb  \calI - \calA^\ast\calA\rb\calGT  \calP_{\mathfrak{T}} }\notag\\
	& \leq\frac{4\fronorm{\mZ^t - \mZ^\natural}  }{ \sigma_{r}(\mZ^\natural) } + \frac{2\fronorm{\mZ^t - \mZ^\natural}  }{ \sigma_{r}(\mZ^\natural) }\cdot \sqrt{\mu_0 s} \cdot \big( \opnorm{\calP_{\mathfrak{T}_t}\calG\calAT} \notag\\
	&\quad +  \opnorm{ \calP_{\mathfrak{T}}\calG\calAT} \big) + \varepsilon \label{label:(c)} \\
	&\leq \frac{4\varepsilon}{16\sqrt{(1+\varepsilon)\cdot \mu_0 s}} + \frac{8\varepsilon \sqrt{\mu_0 s(1+\varepsilon)}}{16\sqrt{(1+\varepsilon)\cdot \mu_0 s}} + \varepsilon\label{label:(d)}\\
		&\leq 2\varepsilon,\notag
\end{align}
where \eqref{label:(c)} is due to Lemma \cite[Lemma 4.1]{wei2016guarantees} and the fact that $\opnorm{\calAT} = \opnorm{\calA} \leq \sqrt{\mu_0 s}$ and $\opnorm{\calP_{\mathfrak{T}_t}\calG\calAT}  = \opnorm{\calA \calGT\calP_{\mathfrak{T}_t}} $, \eqref{label:(d)} follows from \eqref{condition 1}.

\section{Conclusion}

We propose a FIHT-VHL method to solve the blind super-resolution problem in a non-convex scheme. The convergence analysis has been established for FIHT-VHL, showing that the algorithm linearly converges to the target matrix given suitable initialization and provided the number of samples is large enough. The numerical experiments illustrate our theortical results.

\section{Acknowledgement}

We would like to thank professor Ke Wei and Yingzhou Li  for fruitful discussions.


\bibliographystyle{plain}
\bibliography{ref}

\end{document}